\author{Louisa Harutyunyan\affiliationmark{1}
	\thanks{Research supported by FQRNT (Le Fonds de Recherche du Qu\'{e}bec 
			- Nature et Technologies) Doctoral Scholarship.}}
\title{On the total $(k,r)$-domination number of random graphs}
\affiliation{
  	Sorbonne Universit\'{e}s, UPMC Univ Paris 06, CNRS, LIP6 UMR 7606, 
		4 place Jussieu 75005 Paris}
\keywords{random graphs, total $(k,r)$-domination}
\begin{document}
\publicationdetails{VOL}{2015}{ISS}{NUM}{SUBM}

%
\newcommand{\DEF}[1]{{\em #1\/}}
\newcommand{\RR}{\ensuremath{\mathbb R}}
\newcommand{\C}{{\cal C}}
\newcommand{\PP}{\mathbb P}
\newcommand{\EE}{\mathbb E}
\newcommand{\tDelta}{\tilde{\Delta}}
\def\R {{\cal R}}
\def\NN {{\mathbb N}}

\newtheorem{theorem}{Theorem}
\newtheorem{lemma}{Lemma}
\newtheorem{definition}{Definition}
\newtheorem{corollary}{Corollary}
\newtheorem{conjecture}{Conjecture}

\maketitle

\begin{abstract}
A subset $S$ of a vertex set of a graph $G$ is a 
	\emph{total $(k,r)$-dominating set} if every vertex
	$u \in V(G)$ is within distance $k$ of at least $r$ vertices in $S$.
The minimum cardinality among all total $(k,r)$-dominating sets of $G$ is 
	called the \emph{total $(k,r)$-domination number} of $G$, denoted by 
	$\gamma^{t}_{(k,r)}(G)$. 
We previously gave an upper bound on $\gamma^{t}_{(2,r)}(G(n,p))$ in 
	random graphs with non-fixed $p \in (0,1)$.  
In this paper we generalize this result to give an upper bound on  
	$\gamma^{t}_{(k,r)}(G(n,p))$ in random graphs with non-fixed $p \in (0,1)$ 
	for $k\geq 3$ as well as present an upper bound on $\gamma^{t}_{(k,r)}(G)$ 
	in graphs with large girth.
\end{abstract}

\section{Introduction}\label{intro}

In this paper we derive upper bounds on the total $(k,r)$-domination number
	in graphs with large girth as well as in \emph{random graphs}. 
A \emph{random graph} $G(n,p)$ consists of $n$ vertices with each of the
	potential ${n\choose 2}$ edges being inserted independently with
	probability $p$.  
Random graphs can be used to model wireless sensor networks (WSNs), where
	sensors cooperatively collect data to monitor physical or environmental
	conditions.  
Generally WSNs are constructed in unreachable terrain and sensors may be 
	arranged stochastically, which introduces uncertainty and randomness
	in the network structure. 
Distance and multiple domination have been used in the literature to address
	problems in wireless networks, such as area monitoring, fault tolerance
	in wireless sensor networks (WSNs). 
Thus, for positive integers $k$ and $r$, a total $(k,r)$-dominating set
	in random graphs is a natural candidate to address area monitoring and
	fault tolerance in WSNs, where robustness for dominators is achieved
	by choosing a value for $r>1$ and the distance parameter $k$ allows
	increasing local availability by reducing the distance to the 
	dominators \cite{DW06,LWAB12,SG07,ZLL09}.

The rest of the paper is organized as follows.  
In Section \ref{rw} we present a literature survey regarding upper bounds
	on the domination number and its variants.  
Section \ref{gg} derives an upper bound on $\gamma^{t}_{(k,r)}(G)$ in graphs of
	large girth, and Section \ref{rg} derives an upper bound on 
	$\gamma^{t}_{(k,r)}(G(n,p))$ for $k\geq 3$ and non-fixed $p \in (0,1)$ in 
	random graphs.

\section{Related Work}\label{rw}

Distance and multiple domination have been studied extensively by several 
	authors \cite{CR90,CY00,CGS85,HMV07,L04,LTX09,MM75,SSE02,TX07,TX09}.	
In \cite{GZ08,RV07} upper bounds are given on the $r$-tuple domination number.
A set $D \subseteq V(G)$ is a \emph{$r$-tuple dominating set} of $G$ if 
	for every vertex $v \in V(G)$, $|N[v] \cap D| \geq r$, where 
	$N[v] = \{u \in V(G)|(u,v) \in E(G)\} \cup \{v\}$. 
The minimum cardinality of a $r$-tuple dominating set of $G$ is the
	\emph{$r$-tuple domination number} of $G$, denoted $\gamma_{\times r}(G)$. 
Chang \cite{C08} further improved these results for any positive integer $r$
	and for any graph of $n$ vertices with minimum degree $\delta$, where 
	$\displaystyle\gamma_{\times r}(G) \leq 
	\frac{\ln(\delta - r + 2) + \ln \tilde{d}_{r-1} + 1}{\delta - r + 2}n$, 
	and $\displaystyle\tilde{d}_m = \frac{1}{n} \sum^{n}_{i=1} 
	{d_i + 1 \choose m}$ with $d_i$ being the degree of the $i$th vertex of $G$. 

In \cite{CY00} Caro and Yuster give upper bounds on the 
	$r$-tuple and total $r$-domination numbers. 
A set $D \subseteq V(G)$ is a \emph{total $r$-dominating set} of $G$ if 
	for every vertex $v \in V(G)$, $|N(v) \cap D| \geq r$, where
	$N(v) = \{u \in V(G)|(u,v) \in E(G)\}$. 
The minimum cardinality of a total $r$-dominating set of $G$ is the
	\emph{total $r$-domination number} of $G$, denoted $\gamma^{t}_{\times r}(G)$.
In \cite{ZWX07} Zhao et al.  study the total $r$-domination number in 
	graphs.

\begin{theorem}\cite{ZWX07}
In a graph $G$ of order $n$ and minimum degree $\delta \geq r$, where 
	$r \in \mathbb{N}$, if $\frac{\delta}{\ln\delta}\geq 2r$, then
	$\gamma_{\times r}^t (G)\leq \frac{n}{\delta}
	\left(r\ln\delta + \sum_{i=0}^{r-1}\frac{r-i}{i!\delta^{r-1-i}}\right)$.
\end{theorem}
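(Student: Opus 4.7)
The plan is to use the probabilistic method with a deterministic patching step, a standard device for domination bounds of this flavor. I would first form a random set $A \subseteq V(G)$ by including each vertex independently with probability $p$, to be fixed later; for every $v$ with $X_v := |N(v) \cap A| < r$, I would then add to $A$ an arbitrary $r - X_v$ of $v$'s neighbors. The resulting set $S$ satisfies $|N(v) \cap S| \geq r$ for every $v$, so it is a total $r$-dominating set, and linearity of expectation gives
\[
E[|S|] \le np + \sum_{v \in V(G)} E\bigl[\max(0,\, r - X_v)\bigr].
\]

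Since $X_v \sim \mathrm{Bin}(d(v), p)$ is stochastically increasing in $d(v)$ and $x \mapsto \max(0, r-x)$ is weakly decreasing, the per-vertex summand is monotonically decreasing in $d(v)$, so I may replace $d(v)$ by $\delta$. Using $\binom{\delta}{i} \le \delta^i/i!$ then bounds the patching contribution by $n\sum_{i=0}^{r-1} \tfrac{r-i}{i!}(\delta p)^i(1-p)^{\delta - i}$. The key move is to set $p = 1 - \delta^{-r/\delta}$, so that $(1-p)^\delta = 1/\delta^r$ exactly; a short manipulation then gives
\[
(\delta p)^i (1-p)^{\delta-i} \;=\; \delta^i(1-p)^\delta \Bigl(\tfrac{p}{1-p}\Bigr)^{\!i} \;=\; \frac{1}{\delta^{r-i}}\Bigl(\tfrac{p}{1-p}\Bigr)^{\!i}.
\]
The inequality $1 - e^{-x} \le x$ applied at $x = (r\ln \delta)/\delta$ shows $p \le r\ln\delta/\delta$, and the hypothesis $\delta/\ln\delta \ge 2r$ then forces $p \le 1/2$, hence $p/(1-p) \le 1$. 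Substituting both estimates into the bound on $E[|S|]$ yields exactly $\tfrac{n}{\delta}\bigl(r\ln\delta + \sum_{i=0}^{r-1}\tfrac{r-i}{i!\,\delta^{r-1-i}}\bigr)$, and the existence of some outcome attaining the expectation completes the argument.

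The main obstacle is pinning down the right choice of $p$: the target bound, especially the sum $\sum_{i=0}^{r-1}\tfrac{r-i}{i!\,\delta^{r-1-i}}$, essentially dictates the relation $(1-p)^\delta = 1/\delta^r$, and only this precise choice---rather than the more naive $p = (r\ln\delta)/\delta$---causes the binomial expansion to telescope to the stated form. The assumption $\delta/\ln\delta \ge 2r$ enters at exactly one point, to keep $p \le 1/2$ so that the auxiliary factors $(p/(1-p))^i$ remain harmless; once these two observations are in place, the remaining computation is routine.
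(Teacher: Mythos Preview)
This theorem is quoted in the paper's related-work section from \cite{ZWX07}; the present paper does not supply its own proof, so there is nothing to compare against here. That said, your argument is correct and is exactly the standard probabilistic-method proof one expects for this kind of bound: a random seed set with probability $p$, a deterministic repair that adds $\max(0,r-X_v)$ neighbours per deficient vertex, the stochastic-monotonicity reduction to minimum degree, and then the specific choice $p = 1 - \delta^{-r/\delta}$ so that $(1-p)^{\delta} = \delta^{-r}$. Your identification of where the hypothesis $\delta/\ln\delta \ge 2r$ is used---to ensure $p \le 1/2$ and hence $(p/(1-p))^i \le 1$---is also on the mark. One small wording point: when you ``add $r-X_v$ of $v$'s neighbours,'' you should say explicitly that these are chosen from $N(v)\setminus A$, which is possible since $|N(v)\setminus A| = d(v)-X_v \ge \delta - X_v \ge r - X_v$; otherwise the guarantee $|N(v)\cap S|\ge r$ is not immediate.
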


Some works in the literature study upper bounds on the
	$(k,r)$-domination number.
A set $D \subseteq V(G)$ is a \emph{$(k,r)$-dominating set} of $G$ if for
	every vertex $v \in V(G) \setminus D$ is within distance $k$ of $r$ 
	vertices in $D$.  
The minimum cardinality of a $(k,r)$-dominating set of $G$ is the
	\emph{$(k,r)$-domination number} of $G$, denoted $\gamma_{k,r}(G)$.   
In \cite{BHS94} Bean et al. posed the following conjecture. 

\begin{conjecture}\cite{BHS94}
Let $G$ be a graph of order $n$ and let $\delta_k$ denote the smallest
	cardinality among all $k$-neighbourhoods of $G$, where 
	$\delta_k \geq k+r-1$.  
Then for positive integers $k$ and $r$ 
	$\gamma_{(k,r)} (G) \leq \frac{r}{r+k} n$.  
\end{conjecture}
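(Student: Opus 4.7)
The plan is to attempt the conjecture by induction on $|V(G)|$, extending the Meir--Moon style argument that already settles the $r=1$ case. First I would reduce to connected $G$: distinct components can be handled independently as long as each still satisfies the hypothesis $\delta_k \geq k+r-1$, and any component of order $< k+r$ (which is the only way the hypothesis can fail on a component) would need to be absorbed into a base case. I would also pocket the easy sub-case where $G$ itself has diameter $\leq 2k$: then any $r$ vertices form a $(k,r)$-dominating set and $r \leq \frac{r}{r+k}\,n$ since $n \geq k+r$.

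For the inductive step I would extract a spanning tree $T$, root it so as to minimise the height, and pick a leaf $v$ at maximum depth. The hypothesis $\delta_k \geq k+r-1$ guarantees $k+r$ vertices in the closed $k$-ball around $v$, so along the path from $v$ up the tree one can isolate a block $B$ of $k+r$ consecutive vertices (augmented, if necessary, by off-path neighbours of $v$ when the path is too short). The inductive move is to place $r$ carefully chosen vertices of $B$ into the partial dominating set $D$ so that every vertex of $B$ is within distance $k$ of all $r$ of them, and then to recurse on $G - B$. If this succeeds we add exactly $r$ vertices to $D$ for every $k+r$ vertices removed, giving the desired ratio $r/(r+k)$.

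The main obstacle is twofold. First, when $r > k+1$ one cannot fit $r$ vertices inside a single radius-$k$ ball along a path, so the naive "consecutive block" selection fails; one must exploit off-path $k$-neighbours of the leaf $v$ (guaranteed by the $\delta_k$ hypothesis) to locate $r$ vertices that jointly $(k,r)$-dominate $B$, and argue that their removal is still compatible with the inductive count. Second, deleting $B$ from $G$ can destroy the hypothesis $\delta_k \geq k+r-1$ at some surviving vertex near the cut, so the induction needs to be strengthened, e.g.\ by introducing an auxiliary parameter tracking vertices that are already $(k,r)$-dominated by the partial $D$ and therefore need not be re-covered. Making this bookkeeping watertight, and avoiding double-counting vertices that sit at the boundary of successive blocks, is where I expect the real technical work to lie.

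As a fallback in case the tree induction stalls at the $r > k+1$ regime, I would try a fractional/probabilistic approach: select each vertex independently with probability slightly above $r/(r+k)$, patch uncovered vertices by a local augmentation, and derandomise via the method of conditional expectations. The obstacle there is that the elementary binomial tail is just barely too weak under the hypothesis $\delta_k \geq k+r-1$, so the patching step would have to be analysed against the structure imposed by the $\delta_k$ bound rather than treated as an independent correction.
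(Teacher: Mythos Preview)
The statement you are trying to prove is presented in the paper as a \emph{conjecture} (due to Bean, Henning and Swart), not as a theorem; the paper offers no proof of it. The only partial progress the paper cites is the result of Fischermann and Volkmann confirming the bound when $r$ is a multiple of $k$, and the special case $\gamma_{(2,2)}(G)\le (n+1)/2$ of Korneffel et al. There is therefore nothing in the paper to compare your proposal against.

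As to the proposal itself: what you have written is a reasonable outline of the known partial attack, but it is not a proof. Your inductive spanning-tree scheme is essentially the Meir--Moon / Fischermann--Volkmann framework, and you correctly identify the two places where it breaks down for general $r$: the selection of $r$ dominators inside a block of $k+r$ vertices when $r>k+1$, and the failure of the hypothesis $\delta_k\ge k+r-1$ on $G-B$ after deletion. You do not resolve either obstacle; you simply flag them and say ``this is where I expect the real technical work to lie.'' That is precisely the open part of the conjecture. Likewise, your probabilistic fallback is explicitly acknowledged to be ``just barely too weak,'' and the vague appeal to ``structure imposed by the $\delta_k$ bound'' is not an argument. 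In short, you have accurately diagnosed why the conjecture is hard, but you have not proposed a mechanism that overcomes the difficulty, so this cannot be counted as a proof or even a proof sketch.
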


\noindent Fischermann and Volkmann confirmed that the conjecture is valid 
	for all integers $k$ and $r$, where $r$ is a multiple of $k$ \cite{FV05}. 
In \cite{KMV08} Korneffel et al. show that $\gamma_{2,2}(G) \leq \frac{n(G)+1}{2}$.

There are several works in the literature that study upper bounds on
	the domination number and its variants in random graphs.  
Recall that a random graph $G(n,p)$ consists of $n$ vertices with each
	of the potential ${n \choose 2}$ edges being inserted independently
	with probability $p$. 
We say that an event holds \emph{asymptotically almost surely (a.a.s)} if
	the probability that it holds tends to $1$ as $n$ tends to infinity. 

Dreyer \cite{D00} in his dissertation studied the question of domination
	in random graphs. 
Wieland and Godbole proved that the domination number of a random graph,
	denoted $\gamma (G(n,p))$, has a two point concentration \cite{WG01}. 
\begin{theorem}\cite{WG01}\label{th1}
For $p \in (0,1)$ fixed, a.a.s $\,\,\gamma (G(n,p))$ equals	
	$\left\lfloor\mathbb{L}n - \mathbb{L}\left((\mathbb{L}n)(\log n)\right)\right\rfloor + 1$
	or\\
	$\left\lfloor\mathbb{L}n - \mathbb{L}\left((\mathbb{L}n)(\log n)\right)\right\rfloor + 2$,
	where $\mathbb{L}n = \log_{1/(1-p)}n$. 
\end{theorem}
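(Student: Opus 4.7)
The plan is to establish two-point concentration by pairing a first-moment lower bound with a second-moment upper bound. Write $q=1-p$, $L=\mathbb{L}n$, and $s_0=\lfloor L-\mathbb{L}(L\log n)\rfloor$, and let $X_s$ denote the number of dominating sets of size $s$ in $G(n,p)$. For any fixed $s$-subset $S\subseteq V$, the events ``vertex $v\notin S$ has no neighbour in $S$'' are mutually independent across $v$, each of probability $q^s$, so
\begin{equation*}
\mathbb{E}[X_s]=\binom{n}{s}(1-q^s)^{n-s}.
\end{equation*}
The choice of $s_0$ is calibrated so that $nq^{s_0}\approx L\log n$, which places $s_0$ exactly in the window where $\mathbb{E}[X_s]$ transitions from $o(1)$ to $\omega(1)$; the $+1$ and $+2$ offsets in the theorem reflect that this transition occurs over a range of size $O(1)$.

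For the lower bound I would take $s=s_0$ and show $\mathbb{E}[X_{s_0}]=o(1)$, whence Markov's inequality gives $\gamma(G(n,p))>s_0$ a.a.s. The computation uses $\binom{n}{s_0}\leq (ne/s_0)^{s_0}$, the bound $(1-q^{s_0})^{n-s_0}\leq\exp(-(n-s_0)q^{s_0})$, and the estimate $nq^{s_0}\geq L\log n$ coming from $s_0\leq L-\mathbb{L}(L\log n)$. After taking logarithms, the $s_0\log n$ contribution from the binomial coefficient is cancelled by the exponential decay up to lower-order terms, leaving a dominant negative $-s_0\log s_0$-type contribution that drives $\log\mathbb{E}[X_{s_0}]\to-\infty$.

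For the upper bound I would set $s=s_0+2$, verify $\mathbb{E}[X_s]\to\infty$ by an analogous calculation (the extra factor $q^2$ shifts $nq^s$ by a bounded constant and reverses the sign of the dominant term), and then apply the second-moment method. Splitting by the overlap $i=|S\cap T|$ gives
\begin{equation*}
\mathbb{E}[X_s^2]=\sum_{i=0}^{s}\binom{n}{s}\binom{s}{i}\binom{n-s}{s-i}(1-2q^s+q^{2s-i})^{n-2s+i}(1-q^s)^{2(s-i)},
\end{equation*}
where the first exponential factor is the probability that a vertex outside $S\cup T$ has a neighbour in each of $S$ and $T$ (the edges to $S\setminus T$, $T\setminus S$, and $S\cap T$ are independent, giving complementary probability $q^{2s-i}$).

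The main obstacle is showing $\mathbb{E}[X_s^2]=(1+o(1))\mathbb{E}[X_s]^2$, since $q^{2s-i}$ grows geometrically in $i$ while $\binom{s}{i}\binom{n-s}{s-i}$ shrinks, and the balance is delicate for small $i$. The standard remedy is to write $1-2q^s+q^{2s-i}=(1-q^s)^2+q^{2s}(q^{-i}-1)$, expand the $n-2s+i$-th power, and use $\log(1+x)\leq x$ to extract a factor $\exp(nq^{2s}(q^{-i}-1))$; since $nq^{2s}=O((L\log n)^2/n)=o(1)$ for this choice of $s$, this correction is negligible for small $i$, while for large $i$ the trivial bound $\binom{s}{i}\binom{n-s}{s-i}/\binom{n}{s}\leq (s^2/n)^i$ absorbs everything. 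Chebyshev's inequality then yields $X_{s_0+2}>0$ a.a.s., and the two-point concentration follows.
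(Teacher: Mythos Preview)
This theorem is not proved in the present paper: it appears in Section~\ref{rw} (Related Work) as a cited result from Wieland and Godbole~\cite{WG01}, stated without proof. There is therefore no proof in the paper to compare your attempt against.

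That said, your proposal is the standard route and matches the approach of the original reference: a first-moment argument to rule out dominating sets of size $s_0$, combined with a second-moment argument to guarantee one of size $s_0+2$. Your variance formula
\[
\EE[X_s^2]=\sum_{i=0}^{s}\binom{n}{s}\binom{s}{i}\binom{n-s}{s-i}\bigl(1-2q^s+q^{2s-i}\bigr)^{n-2s+i}(1-q^s)^{2(s-i)}
\]
is correct, as is the key observation that $nq^{2s}=o(1)$ at $s=s_0+2$, which is what makes the correlation term $q^{2s}(q^{-i}-1)$ harmless for small overlaps. One caution on the lower bound: the inequality $nq^{s_0}\ge L\log n$ alone is not enough to force $\log\EE[X_{s_0}]\to-\infty$, since the binomial contribution $s_0\log(n/s_0)$ is asymptotically $L\log n$ as well; you need to track the next-order term carefully (the floor in $s_0$ and the precise constant in $\mathbb{L}(L\log n)$ matter here) rather than just the leading asymptotics. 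The sketch you give glosses over this cancellation, so when you write it out in full be sure the $-s_0\log s_0$ term genuinely dominates after the two $L\log n$ terms cancel.
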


\noindent Wang and Xiang \cite{WX09} extend this result for $2$-tuple domination number
	of $G(n,p)$. 
\begin{theorem}\cite{WX09}
For $p \in (0,1)$ fixed, a.a.s $\,\,\gamma_{\times 2} (G(n,p))$ equals
	$\left\lfloor\mathbb{L}n - \mathbb{L}(\log n) +\mathbb{L}\left(\frac{p}{1-p}\right)\right\rfloor + 1$
	or
	$\left\lfloor\mathbb{L}n - \mathbb{L}(\log n) + \mathbb{L}\left(\frac{p}{1-p}\right)\right\rfloor + 2$,
	where $\mathbb{L}n = \log_{1/(1-p)}n$. 
\end{theorem}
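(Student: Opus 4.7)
This is a two-point concentration statement analogous to Theorem \ref{th1}, so the natural plan is the first-moment/second-moment template of Wieland and Godbole, adapted to account for the extra ``one-neighbour'' failure mode of $2$-tuple domination. Write $q = 1-p$, $k^{\ast} = \mathbb{L}n - \mathbb{L}(\log n) + \mathbb{L}(p/q)$, $k_{0} = \lfloor k^{\ast} \rfloor$, and let $X_{k}$ denote the number of $2$-tuple dominating sets of $G(n,p)$ of size exactly $k$. Fixing $D \subseteq V$ with $|D| = k$ and using that the edges between $V \setminus D$ and $D$ are disjoint from the edges inside $D$, one gets
$$\PP[D \text{ is 2-td}] \;=\; \bigl(1 - q^{k} - k p\, q^{k-1}\bigr)^{n-k}\,\PP\bigl[\delta(G[D]) \geq 1\bigr],$$
where $G[D] \sim G(k, p)$: the first factor comes from the independent events ``$v \notin D$ has $\geq 2$ neighbours in $D$'', and the second from requiring each $v \in D$ to be non-isolated in $G[D]$. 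For $k = \Theta(\log n)$ and $p$ fixed the second factor is $1 - o(1)$ and $(1 - q^{k-1}(q + kp))^{n-k} \sim \exp(-n k p\, q^{k-1})$, so $\EE[X_{k}] \sim \binom{n}{k}\exp(-n k p\, q^{k-1})$, and a short logarithmic calculation shows this expectation crosses $1$ precisely when $k$ crosses $k^{\ast}$.

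The lower bound $\gamma_{\times 2}(G(n,p)) \geq k_{0} + 1$ a.a.s.\ then follows from Markov applied to $X_{k_{0}}$, since $\EE[X_{k_{0}}] = o(1)$. For the matching upper bound I would take $k = k_{0} + 2$, verify $\EE[X_{k}] \to \infty$, and apply the second moment method. The relevant second moment is
$$\EE[X_{k}^{2}] \;=\; \sum_{j=0}^{k}\binom{n}{k}\binom{k}{j}\binom{n-k}{k-j}\,\PP\bigl[D_{1}, D_{2}\text{ both 2-td}\bigm| |D_{1}\cap D_{2}| = j\bigr],$$
where the inner probability factors, for each $v \notin D_{1}\cup D_{2}$, once one conditions on the triple of edge counts into $D_{1}\setminus D_{2}$, $D_{1}\cap D_{2}$, and $D_{2}\setminus D_{1}$. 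The $j = 0$ term reproduces $\EE[X_{k}]^{2}$, and the remaining task is to show that the $j \geq 1$ overlap contributions sum to $o(\EE[X_{k}]^{2})$, after which Chebyshev yields $X_{k} > 0$ a.a.s.

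The delicate step is this variance bound. Compared with the single-domination case underlying Theorem \ref{th1}, the per-vertex bad probability here is $q^{k-1}(q + kp)$, whose $kp$ contribution has magnitude of order $\log n$ and must be carried carefully through the entire $j$-sum. It is precisely this factor that produces the additive shift $\mathbb{L}(p/(1-p))$ in the threshold relative to the $\gamma(G(n,p))$ result, and controlling the resulting lower-order terms uniformly in $j$ is where the bulk of the calculation lives.
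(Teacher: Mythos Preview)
This theorem is not proved in the paper: it is quoted from \cite{WX09} in the related-work section and stated without proof, so there is no argument here to compare your proposal against.

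For what it is worth, your outline is the standard first-moment/second-moment template one would expect for such a two-point concentration result, and the per-vertex failure probability $q^{k-1}(q+kp)$ and the resulting threshold $k^{\ast}=\mathbb{L}n-\mathbb{L}(\log n)+\mathbb{L}(p/q)$ are computed correctly. One small point worth tightening in the lower-bound step: to use Markov on $X_{k_{0}}$ you implicitly need monotonicity, i.e.\ that any superset of a $2$-tuple dominating set is again $2$-tuple dominating; this holds because the definition uses closed neighbourhoods, so $\gamma_{\times 2}\le k_{0}$ does force $X_{k_{0}}\ge 1$. The genuinely laborious part, as you note, is the overlap analysis in the second moment, but since the paper under review does not carry out (or even sketch) that computation, there is nothing further to compare.
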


Bonato and Wang \cite{BW08} study the total domination number and the \emph{independent
	domination number} in random graphs.
For a graph $G$, a set $D\subseteq V(G)$ is an \emph{independent dominating
	set} of $G$ if $D$ is both an independent set and a dominating set of $G$. 
The \emph{independent domination number} of $G$, denoted $\gamma_i (G)$, is the
	minimum order of an independent dominating set of $G$.  
\begin{theorem}\cite{BW08}
For $p \in (0,1)$ fixed, a.a.s $\,\,\gamma_{t}(G(n,p))$ equals
	$\left\lfloor\mathbb{L}n - \mathbb{L}((\mathbb{L}n)(\log n))\right\rfloor+1$
	or\\
	$\left\lfloor\mathbb{L}n - \mathbb{L}((\mathbb{L}n)(\log n)) \right\rfloor+2$,
	where $\mathbb{L}n = \log_{1/(1-p)}n$. 
\end{theorem}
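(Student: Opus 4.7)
The plan is to mirror the Wieland--Godbole proof of Theorem~\ref{th1}, with total domination replacing ordinary domination. The starting observation is that for any fixed $S\subseteq V$ of size $s$, independence of edges in $G(n,p)$ gives
\[
\mathbb{P}[S \text{ totally dominates } V] = \mathbb{P}\bigl[G(n,p)[S] \text{ has no isolated vertex}\bigr]\cdot\bigl(1-(1-p)^{s}\bigr)^{n-s},
\]
the factorization being valid because the events ``$v$ has a neighbor in $S$'' for distinct $v\notin S$ involve disjoint edge-sets and are also independent of $G(n,p)[S]$. For $s\sim\mathbb{L}n$ with $p\in(0,1)$ fixed, a union bound gives $\mathbb{P}[G(n,p)[S]\text{ has no isolated vertex}]=1-O(s(1-p)^{s-1})=1-o(1)$, which explains why the total-domination threshold coincides with the ordinary one.

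First I would prove the lower bound. Set $s_0=\lfloor\mathbb{L}n-\mathbb{L}((\mathbb{L}n)(\log n))\rfloor$, so that $(1-p)^{s_0}\gtrsim(\mathbb{L}n)(\log n)/n$, and let $X_s$ count total dominating $s$-sets. Bounding $\mathbb{P}[S\text{ is TDS}]\le(1-(1-p)^s)^{n-s}\le\exp(-(n-s)(1-p)^s)$ and $\binom{n}{s}\le(en/s)^s$, one checks that $\log\mathbb{E}[X_{s_0}]\to-\infty$, hence $\mathbb{E}[X_{s_0}]=o(1)$ and Markov's inequality yields $\gamma_t(G(n,p))>s_0$ a.a.s.

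For the upper bound, set $s_1=s_0+2$. A parallel first-moment computation using $\mathbb{P}[S\text{ is TDS}]\ge(1-s_1(1-p)^{s_1-1})(1-(1-p)^{s_1})^{n-s_1}$ shows $\mathbb{E}[X_{s_1}]\to\infty$. Then
\[
\mathbb{E}[X_{s_1}^2]=\binom{n}{s_1}\sum_{j=0}^{s_1}\binom{s_1}{j}\binom{n-s_1}{s_1-j}\,\mathbb{P}[S,T\text{ both TDSs}],
\]
where $|S|=|T|=s_1$ and $|S\cap T|=j$. The joint probability factors as an ``internal'' contribution from $G(n,p)[S\cup T]$---which is $1-o(1)$ uniformly in $j$ by a union bound---times $\prod_{v\notin S\cup T}(1-2(1-p)^{s_1}+(1-p)^{2s_1-j})$ by inclusion--exclusion on the events ``$v$ has a neighbor in $S$'' and ``$v$ has a neighbor in $T$''. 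The main obstacle, and the heart of the proof, is showing $\mathbb{E}[X_{s_1}^2]/\mathbb{E}[X_{s_1}]^2=1+o(1)$: the $j=0$ term contributes the dominant $(1+o(1))\mathbb{E}[X_{s_1}]^2$, while the small-overlap terms $1\le j\le s_1$ must be shown to contribute $o(\mathbb{E}[X_{s_1}]^2)$ via careful Taylor expansions of $\log(1-2(1-p)^{s_1}+(1-p)^{2s_1-j})$ and precise accounting with the floor function so that the threshold comes out exactly right. Chebyshev then gives $X_{s_1}>0$ a.a.s., completing the proof.
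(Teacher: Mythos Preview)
The paper does not contain a proof of this theorem: it is quoted from \cite{BW08} in the Related Work section and stated without proof, so there is nothing in the present paper to compare your attempt against.

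That said, your outline is essentially the Bonato--Wang argument, which in turn follows Wieland--Godbole \cite{WG01}: a first-moment/Markov lower bound on $\gamma_t$ via $\EE[X_{s_0}]\to 0$, and a second-moment/Chebyshev upper bound via $\EE[X_{s_1}^2]=(1+o(1))\EE[X_{s_1}]^2$. Your key observation---that for $s\sim\mathbb{L}n$ the induced subgraph $G(n,p)[S]$ has no isolated vertex with probability $1-o(1)$, so the total-domination count tracks the ordinary-domination count---is exactly the reason the two thresholds coincide. The variance computation you sketch (splitting over the overlap $j=|S\cap T|$, using inclusion--exclusion for the external vertices, and showing the $j\ge 1$ terms are negligible) is the standard and correct route; the only real work, as you note, is the careful bookkeeping with the floor to land on the precise two-point window.
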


\begin{theorem}\cite{BW08}
For $p \in (0,1)$ fixed, a.a.s 
	$\left\lfloor\mathbb{L}n - \mathbb{L}((\mathbb{L}n)(\log n))\right\rfloor+1
	\leq \gamma_{i}(G(n,p)) \leq \lfloor\mathbb{L}n\rfloor$,
	where $\mathbb{L}n = \log_{1/(1-p)}n$. 
\end{theorem}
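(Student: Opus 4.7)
The lower bound is essentially free: every independent dominating set is also a dominating set, so $\gamma_{i}(G) \geq \gamma(G)$ for every graph $G$. Applying Theorem~\ref{th1} (Wieland--Godbole) to $G(n,p)$ therefore delivers the stated lower bound on $\gamma_{i}(G(n,p))$ a.a.s.

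The upper bound $\gamma_{i}(G(n,p)) \leq \lfloor \mathbb{L}n \rfloor$ is the substantive content, and I would prove it by the probabilistic method. Set $k = \lfloor \mathbb{L}n \rfloor$ and let $X$ count the $k$-subsets of $V(G(n,p))$ that form an independent dominating set. The key structural observation is that, for a fixed $S$ of size $k$, the event ``$S$ is independent'' (determined by the $\binom{k}{2}$ edges inside $S$) and the events ``$v$ is dominated by $S$'' for $v \notin S$ (each determined only by the $k$ potential edges between $v$ and $S$) involve pairwise disjoint edge slots, and so are mutually independent. Hence
$$E[X] = \binom{n}{k}(1-p)^{\binom{k}{2}}\bigl(1-(1-p)^{k}\bigr)^{n-k}.$$
Using the defining identity $(1-p)^{\mathbb{L}n} = 1/n$, the third factor is $\Theta(1)$ and the second is $n^{-(k-1)/2 + o(1)}$, so $E[X]$ is of order $n^{(k+1)/2}/k!$ and tends to infinity super-polynomially in $n$.

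To upgrade $E[X] \to \infty$ into $X > 0$ a.a.s., I would apply Chebyshev's inequality after bounding the variance. Splitting $E[X^{2}]$ over pairs $(S,T)$ according to $t = |S \cap T|$, the joint probability factors as an independence part $(1-p)^{2\binom{k}{2}-\binom{t}{2}}$ (edges inside $S \cup T$) together with a domination part whose typical contribution for $v \notin S \cup T$ is $1 - 2(1-p)^{k} + (1-p)^{2k-t}$ by inclusion--exclusion, with smaller adjustments coming from vertices in $S \triangle T$. The $t = 0$ term then reproduces essentially $E[X]^{2}$, and the main obstacle is showing that the sum over $t \geq 1$ is $o(E[X]^{2})$. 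The delicate regime is small $t$, where the multiplicities $\binom{k}{t}\binom{n-k}{k-t}$ are large and the gain $(1-p)^{-\binom{t}{2}}$ from shared non-edges is modest; closing the argument requires exploiting the slight improvement $(1-p)^{2k-t} < (1-p)^{2k}$ in the two-set domination factor to absorb these terms, after which Chebyshev yields $\Pr[X = 0] \to 0$ as desired.
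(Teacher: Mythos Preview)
The paper does not prove this theorem: it appears in Section~\ref{rw} (Related Work) as a result quoted from \cite{BW08}, with no accompanying argument. There is therefore no ``paper's own proof'' to compare your proposal against.

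On the merits of your sketch: the lower bound is indeed immediate from $\gamma_i(G)\ge\gamma(G)$ together with Theorem~\ref{th1}, exactly as you say. Your second-moment plan for the upper bound is a reasonable line of attack and is in the spirit of the Bonato--Wang argument. Two small points. First, your inequality $(1-p)^{2k-t} < (1-p)^{2k}$ is written in the wrong direction; in fact $(1-p)^{2k-t} > (1-p)^{2k}$, so the joint domination probability $1-2(1-p)^k+(1-p)^{2k-t}$ is \emph{larger} than $(1-(1-p)^k)^2$, which works \emph{against} you in the variance bound rather than for you. The computation still closes because, for $t<k$, the excess $(1-p)^{2k-t}-(1-p)^{2k}$ is of order $n^{t/k-2}$, so raising to the $(n-2k+t)$th power contributes only a factor $\exp\bigl(n^{t/k-1}\bigr)=1+o(1)$; the actual suppression of the $t\ge 1$ terms comes from the combinatorial factor $\binom{k}{t}\binom{n-k}{k-t}/\binom{n-k}{k}\approx k^{2t}/n^{t}$ overwhelming the independence gain $(1-p)^{-\binom{t}{2}}\approx n^{\binom{t}{2}/k}$. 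Second, the boundary contributions from vertices in $S\triangle T$ require a little care, since conditioning on $S$ independent forces non-edges from $v\in S\setminus T$ to $S\cap T$, slightly lowering the probability that $v$ is dominated by $T$; but there are only $O(k)$ such vertices and the effect is absorbed. With those adjustments your outline is sound.
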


\noindent Wang further studied the independent domination number of random graphs \cite{W10}. 
\begin{theorem}\cite{W10}\label{th2}
Let $p \in (0,1)$ and $\epsilon \in \left(0, \frac{1}{2}\right)$ be two real
	numbers. 
Let $k = k(p,\epsilon)\geq 1$ be the smallest integer satisfying
	$(1-p)^k < \frac{1}{2} - \epsilon$. 
A.a.s. $\gamma(G(n,p))\leq \gamma_i (G(n,p)) \leq 
	\left\lfloor\mathbb{L}n - \mathbb{L}((\mathbb{L}n)(\log n))\right\rfloor+k+1$,
	where $\mathbb{L}n = \log_{1/(1-p)}n$. 
\end{theorem}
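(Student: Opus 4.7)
The inequality $\gamma(G(n,p))\leq\gamma_{i}(G(n,p))$ is immediate since every independent dominating set is in particular a dominating set, so the content of the theorem lies entirely in the upper bound on $\gamma_{i}$. My plan is a direct second-moment argument on a uniformly random $s$-subset of $V$, in the spirit of the proof of Theorem \ref{th1}.

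Set $s=\lfloor\mathbb{L}n-\mathbb{L}((\mathbb{L}n)\log n)\rfloor+k+1$ and let $X$ count the $s$-subsets of $V$ that are simultaneously independent and dominating in $G(n,p)$. For a fixed $s$-set $S$, the event ``$S$ is independent'' depends only on the $\binom{s}{2}$ potential edges inside $S$, while ``$S$ dominates $V\setminus S$'' depends only on the $s(n-s)$ edges between $S$ and $V\setminus S$; these edge sets being disjoint, the two events are independent and
\begin{equation*}
\mathbb{E}[X]=\binom{n}{s}(1-p)^{\binom{s}{2}}\bigl(1-(1-p)^{s}\bigr)^{n-s}.
\end{equation*}
Using $(1-p)^{s}\sim (1-p)^{k+1}(\mathbb{L}n)(\log n)/n$ together with Stirling, an asymptotic calculation yields
\begin{equation*}
\log\mathbb{E}[X]\sim\frac{(\log n)^{2}}{\log(1/(1-p))}\Bigl(\tfrac12-(1-p)^{k+1}\Bigr).
\end{equation*}
The hypothesis $(1-p)^{k}<\tfrac12-\epsilon$ forces $(1-p)^{k+1}<\tfrac12$ with strictly positive margin at least $p/2+(1-p)\epsilon$, so $\mathbb{E}[X]$ tends to infinity super-polynomially.

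To upgrade ``$\mathbb{E}[X]\to\infty$'' to ``$X\geq 1$ a.a.s.'', I apply the second-moment method: decompose $\mathbb{E}[X^{2}]$ according to $j=|S_{1}\cap S_{2}|$ and show $\mathbb{E}[X^{2}]/\mathbb{E}[X]^{2}=1+o(1)$, after which Chebyshev delivers $\mathbb{P}(X=0)\to 0$. For a pair with intersection $j$, the joint probability factors into $(1-p)^{2\binom{s}{2}-\binom{j}{2}}$ for the merged independence event, a factor $\bigl(1-2(1-p)^{s}+(1-p)^{2s-j}\bigr)^{n-2s+j}$ for the vertices outside $S_{1}\cup S_{2}$, and a conditional factor $\bigl(1-(1-p)^{s-j}\bigr)^{2(s-j)}$ for the $2(s-j)$ vertices in the symmetric difference, where the independence event has already forced the edges from each such vertex into $S_{1}\cap S_{2}$ to be absent.

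The main obstacle is the medium-$j$ regime of this sum, where the combinatorial weight $\binom{s}{j}\binom{n-s}{s-j}/\binom{n}{s}$ must be balanced against the extra factor $(1-p)^{-\binom{j}{2}}$ coming from the shared independence constraint. The slack $\epsilon$ built into the hypothesis on $k$ is exactly what keeps this trade-off exponentially favourable uniformly in $j$; to finish I would partition the sum into the three regimes $j=O(1)$, $1\ll j\ll s$, and $j=s-O(1)$, estimating each with entropy/Stirling bounds, as in the two-point concentration argument that proves Theorem \ref{th1}.
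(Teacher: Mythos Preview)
This theorem is not proved in the present paper at all: it is quoted verbatim from Wang~\cite{W10} in the related-work survey of Section~\ref{rw}, and no argument or sketch for it appears anywhere in the text. There is therefore nothing here against which your proposal can be compared.

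Regarding the proposal itself, your first-moment computation is correct, and the decomposition of $\PP[S_{1},S_{2}\text{ both independent and dominating}]$ according to $j=|S_{1}\cap S_{2}|$ is set up accurately (in particular the conditional factor $\bigl(1-(1-p)^{s-j}\bigr)^{2(s-j)}$ for vertices in the symmetric difference is the right one). The genuine gap is exactly where you say it is: you have not shown that
\[
\sum_{j\ge 1}\frac{\binom{s}{j}\binom{n-s}{s-j}}{\binom{n}{s}}\,
(1-p)^{-\binom{j}{2}}\cdot
\frac{\bigl(1-2(1-p)^{s}+(1-p)^{2s-j}\bigr)^{\,n-2s+j}\bigl(1-(1-p)^{s-j}\bigr)^{2(s-j)}}
{\bigl(1-(1-p)^{s}\bigr)^{2(n-s)}}
=o(1).
\]
This is the entire content of the second-moment step, and it is not routine: for $j$ a positive fraction of $s$ the factor $(1-p)^{-\binom{j}{2}}$ is of order $\exp\bigl(\Theta((\log n)^{2})\bigr)$ and must be beaten by the hypergeometric weight and the domination ratio simultaneously. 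A quick check at $j=s-O(1)$ and at $j=\lfloor\alpha s\rfloor$ with $0<\alpha<1$ suggests the trade-off does go your way, with the slack coming from $\tfrac12-(1-p)^{k+1}>0$, but ``partition into three regimes and use entropy/Stirling bounds, as in Theorem~\ref{th1}'' is a description of the work to be done, not the work itself. Until that sum is controlled uniformly in $j$, what you have is a credible plan rather than a proof.
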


\noindent If $p > \frac{1}{2}$, then for $\epsilon \in\left(0, p-\frac{1}{2}\right)
	\subset \left(0, \frac{1}{2}\right)$, by Theorems \ref{th1} and \ref{th2}, the
	following concentration result follows. 

\begin{corollary}\cite{W10}
For $p \in \left(\frac{1}{2}, 1\right)$ fixed, a.a.s. 
	$\gamma(G(n,p))\leq \gamma_i (G(n,p)) \leq$
	$\left\lfloor\mathbb{L}n - \mathbb{L}((\mathbb{L}n)(\log n))\right\rfloor+2$,
	where $\mathbb{L}n = \log_{1/(1-p)}n$.
\end{corollary}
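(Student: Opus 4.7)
The plan is to derive this corollary as essentially a direct specialization of Theorem \ref{th2}. Since the statement is billed as a concentration-type consequence of Theorems \ref{th1} and \ref{th2}, I would not reprove the hard probabilistic content — those theorems supply it — but rather identify the right choice of parameters and check two elementary inequalities.

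First I would observe that the lower-bound half, $\gamma(G(n,p))\leq \gamma_i(G(n,p))$, is purely deterministic: every independent dominating set is, a fortiori, a dominating set, so it holds for every graph and hence a.a.s.\ for $G(n,p)$ in particular. That takes care of one inequality for free.

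For the upper bound, the key step is to show that the parameter $k=k(p,\varepsilon)$ appearing in Theorem \ref{th2} can be taken to be $1$ once $p>\tfrac12$. Recall that $k$ is defined as the smallest positive integer with $(1-p)^k<\tfrac12-\varepsilon$. If we fix any $\varepsilon\in(0,p-\tfrac12)$ — which is a nonempty interval precisely because $p>\tfrac12$, so that $(0,p-\tfrac12)\subset(0,\tfrac12)$, validating the hypotheses of Theorem \ref{th2} — then $\tfrac12-\varepsilon>1-p=(1-p)^1$, so $k=1$ works. Substituting $k=1$ into Theorem \ref{th2} yields exactly
$$\gamma_i(G(n,p))\leq \bigl\lfloor \mathbb{L}n - \mathbb{L}((\mathbb{L}n)(\log n))\bigr\rfloor + 2,$$
a.a.s., which combined with the trivial inequality above completes the proof.

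There is no real obstacle here — everything reduces to verifying that $k(p,\varepsilon)=1$ for a suitable choice of $\varepsilon$ when $p>\tfrac12$. The only mild subtlety is ensuring $\varepsilon$ lies in the intersection of the two required intervals $(0,\tfrac12)$ and $(0,p-\tfrac12)$, which is immediate since $p-\tfrac12<\tfrac12$ whenever $p<1$.
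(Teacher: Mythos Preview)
Your proposal is correct and matches the paper's own justification essentially verbatim: the paper simply notes that for $p>\tfrac12$ one may take $\varepsilon\in(0,p-\tfrac12)\subset(0,\tfrac12)$ and invoke Theorems \ref{th1} and \ref{th2}. Your observation that the lower bound $\gamma\le\gamma_i$ is a deterministic triviality (so Theorem \ref{th1} is not strictly needed for the inequality as stated) is a fair sharpening of the paper's phrasing.
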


\cite{H14} studies an upper bound on $\gamma_{(2,r)}^t (G(n,p))$ in 
	random graphs for non-fixed $p \in (0,1)$.  
\begin{theorem}\cite{H14} \label{k2}
Let $c>1$ be a fixed constant. 
Then for any positive integer $r$, in a random graph $G(n,p)$ with 
	$\displaystyle p \geq c\sqrt{\frac{\log n}{n}}$, \,\, a.a.s. \, 
	$\displaystyle \gamma^{t}_{(2, r)}\left(G(n, p)\right) = r+1$.
\end{theorem}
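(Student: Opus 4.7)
The plan is to prove the two inequalities $\gamma^{t}_{(2,r)}(G(n,p)) \geq r+1$ and $\gamma^{t}_{(2,r)}(G(n,p)) \leq r+1$ separately: the lower bound deterministically, the upper bound by a first-moment calculation.

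For the lower bound, I would argue that whenever $S$ is a total $(2,r)$-dominating set of any graph and $v \in S$, the definition forces $r$ vertices of $S \setminus \{v\}$ to be within distance $2$ of $v$, so $|S| \geq r+1$. This is a purely combinatorial observation independent of $p$.

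For the upper bound it suffices to exhibit one set of size $r+1$ that is a total $(2,r)$-dominating set a.a.s. I would fix arbitrary distinct vertices $v_1, \dots, v_{r+1}$, set $S = \{v_1, \dots, v_{r+1}\}$, and observe that $S$ fails to be total $(2,r)$-dominating iff some $u \in V(G)$ has at least two vertices of $S \setminus \{u\}$ outside its $2$-ball. Writing $q$ for the maximum probability, over triples $(u, v_i, v_j)$ of distinct vertices, that neither $v_i$ nor $v_j$ is within distance $2$ of $u$, the union bound gives
\[
\PP[S \text{ fails}] \leq n \binom{r+1}{2} q.
\]
To compute $q$, I would decompose the joint event edge-by-edge: we need $(u,v_i), (u,v_j) \notin E$, and for every $w \notin \{u, v_i, v_j\}$ either $(u,w) \notin E$ or both $(v_i,w), (v_j, w) \notin E$, the latter per-$w$ event having probability $(1-p) + p(1-p)^2 = 1 - 2p^2 + p^3$ by conditioning on whether $(u,w) \in E$. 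Hence
\[
q = (1-p)^2 (1 - 2p^2 + p^3)^{n-3} \leq \exp\bigl(-2p^2(n-3) + O(p^3 n)\bigr).
\]
With $p \geq c\sqrt{\log n / n}$ and $c > 1$ we have $p^2 n \geq c^2 \log n$ and $p^3 n \to 0$, so $q \leq n^{-2c^2 + o(1)}$ and therefore $\PP[S \text{ fails}] = O\bigl(n^{1 - 2c^2 + o(1)}\bigr) \to 0$.

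The main technical subtlety is that the events ``$d(u, v_i) > 2$'' and ``$d(u, v_j) > 2$'' share the edges incident to $u$, so they are positively correlated and cannot be multiplied naively. The edge-by-edge decomposition above handles this correctly, and happily it agrees to leading order with the naive independent guess $(n^{-c^2})^2 = n^{-2c^2}$, so the union bound closes cleanly under the hypothesis $c > 1$.
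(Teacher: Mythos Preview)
Your argument has the right shape but contains a small gap in the failure characterization. You assert that $S$ fails iff some $u$ has at least \emph{two} vertices of $S\setminus\{u\}$ outside $N_2(u)$. This is correct when $u\notin S$, but when $u\in S$ the set $S\setminus\{u\}$ has only $r$ elements, and since $u\notin N_2(u)$ by definition you need \emph{all} $r$ of them inside $N_2(u)$; thus failure for such $u$ is triggered by a \emph{single} missing $v_j$, and your pair bound $q$ does not cover it. The repair is cheap --- there are only $r+1$ such vertices $u$, and for each fixed pair $u,v_j$ one has $\PP[d(u,v_j)>2]\le(1-p)(1-p^2)^{n-2}\le n^{-c^2+o(1)}$, so a union bound of size $r(r+1)$ handles it --- but it must be said. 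A second, more cosmetic point: you invoke $p^3n\to 0$, which holds only near the threshold $p\asymp\sqrt{\log n/n}$; for general $p\ge c\sqrt{\log n/n}$ either appeal to monotonicity of the event in $p$, or note directly that $1-2p^2+p^3\le 1-p^2\le e^{-p^2}$ for all $p\in(0,1)$, which already yields $q\le e^{-p^2(n-3)}$.

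With those fixes your route is genuinely different from the paper's. The approach of \cite{H14} (described in the paper and mirrored in the proof of Theorem~\ref{prop_distk}) bounds only the \emph{single}-vertex event $\PP[v_i\notin N_2(u)]\le(1-p^2)^{n-2}$ and then applies Markov's inequality to the count of non-dominators of $u$; this is what forces the hypothesis $c>1$, since the final union bound is $n\cdot O(1)\cdot n^{-c^2+o(1)}$. Your pair-based computation, with its explicit edge-by-edge treatment of the correlation, produces $q\le n^{-2c^2+o(1)}$ for the dominant term and would therefore close the argument already for $c>1/\sqrt{2}$ (the $u\in S$ contribution needing only $c>0$). So your argument is sharper, at the price of the dependency analysis you flagged; the paper's approach trades that sharpness for simplicity.
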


To the best of our knowledge, there are no works in the literature that 
	study the upper bounds on the total $(k,r)$-domination number in general
	graphs or in random graphs.  
In this paper we give an upper bound on $\gamma_{(k,r)}^t (G)$ in graphs of large
	girth and extend the results of \cite{H14} to derive an upper bound
	on $\gamma_{(k,r)}^t (G(n,p))$ in random graphs.

\vspace{4mm}
\section{Total $(k,r)$-domination number in graphs of large girth}\label{gg}

In this section we derive an upper bound on the total $(k,r)$-domination 
	number in graphs with large girth.
We present our result in Theorem \ref{gen}. 
Although our result is not tight, we do obtain a bound with relatively 
	simple expression.   

\begin{theorem}\label{gen}
Consider a graph $G$, where $n=|V(G)|$. 
Let $G$ be of minimum degree at least $d$, and of girth at least $2k+1$. 
Then for any positive integers $k$ and $r$, 
	$\displaystyle\gamma^{t}_{(k,r)}(G) \leq \frac{2nr}{(d-1)^k} + nre^{-\frac{r}{4}}$.
\end{theorem}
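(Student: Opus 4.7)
The plan is a two-stage probabilistic construction in the spirit of the Alon--Spencer / Rautenbach--Volkmann arguments for $r$-tuple domination, adapted to distance $k$. I would first form a random set $D_1$ by including each vertex independently with probability $p := 2r/(d-1)^k$ (assuming $(d-1)^k \geq 2r$; otherwise the bound $\tfrac{2nr}{(d-1)^k} \geq n$ is vacuous). For each vertex $u$ set $X_u := |D_1 \cap (B_k(u) \setminus \{u\})|$, where $B_k(u)$ is the ball of radius $k$ around $u$; if $X_u < r$ attach an auxiliary set $A_u \subseteq B_k(u) \setminus \{u\}$ of size $r - X_u$. Then $D := D_1 \cup \bigcup_{u : X_u < r} A_u$ is by design a total $(k,r)$-dominating set, because every $u$ has at least $\max(X_u,r) \geq r$ elements of $D$ inside $B_k(u) \setminus \{u\}$.

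The crucial combinatorial ingredient is that the girth condition $g(G) \geq 2k+1$ forces the BFS layers $N_i(u) := \{v : d(u,v) = i\}$ to grow like those of a tree up to depth $k$. For each $1 \leq i \leq k-1$ and each $v \in N_i(u)$: (i) $v$ has exactly one neighbour in $N_{i-1}(u)$ (two would give a cycle of length $\leq 2i \leq 2k-2$); (ii) $v$ has no neighbour in $N_i(u)$ (else a cycle of length $\leq 2i+1 \leq 2k-1$); and (iii) distinct vertices of $N_i(u)$ share no neighbour in $N_{i+1}(u)$ (else a cycle of length $\leq 2i+2 \leq 2k$). Combined with $\delta(G) \geq d$, this gives the recursion $|N_{i+1}(u)| \geq (d-1) |N_i(u)|$ for $1 \leq i \leq k-1$, whence $|B_k(u) \setminus \{u\}| \geq d(d-1)^{k-1} \geq (d-1)^k$. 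In particular the auxiliary sets $A_u$ can always be chosen.

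Next I would estimate $\mathbb{E}[|D|]$. By linearity and the crude bound $|A_u| \leq r$,
\[
\mathbb{E}[|D|] \leq np + r \sum_{u \in V(G)} \mathbb{P}[X_u < r].
\]
Because $X_u$ stochastically dominates a $\mathrm{Bin}((d-1)^k, p)$ variable with mean at least $2r$, the multiplicative Chernoff inequality applied with relative deviation $\tfrac{1}{2}$ gives $\mathbb{P}[X_u < r] \leq \exp(-\tfrac{1}{2} \cdot \tfrac{1}{4} \cdot 2r) = e^{-r/4}$. Substituting $p = 2r/(d-1)^k$ yields $\mathbb{E}[|D|] \leq \tfrac{2nr}{(d-1)^k} + nre^{-r/4}$, and the probabilistic method delivers one realisation of $D$ of this size, which is the claimed bound.

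The main point requiring care is the girth-based counting step: one must verify the three structural claims above about BFS layers and propagate the factor $(d-1)$ correctly through $k-1$ levels. The remaining issues---possible overlaps among different $A_u$, the boundary case $(d-1)^k < 2r$, and ensuring $p \leq 1$---are handled by the crude bound $|A_u| \leq r$ and by observing that in the boundary regime the target inequality is already trivial.
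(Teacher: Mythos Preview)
Your proposal is correct and follows essentially the same two-stage probabilistic construction as the paper: independent random selection with $p = 2r/(d-1)^k$, the girth bound $|N_k(u)| \geq (d-1)^k$, a Chernoff estimate giving $\mathbb{P}[X_u < r] \leq e^{-r/4}$, and a deterministic repair set of size at most $r$ per deficient vertex. Your BFS-layer derivation of the neighbourhood lower bound and your explicit handling of the boundary case $(d-1)^k < 2r$ are in fact tidier than the paper's somewhat informal contradiction argument, but the overall strategy is identical.
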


\begin{proof}

Let us pick, randomly and independently, each vertex $v \in V(G)$
	with probability $p$. 
Let $S \subset V(G)$ be the set of vertices picked. 
We will determine the value of $p$ by the end of the proof.   
$S$ is a random set and is part of the total $(k,r)$-dominating set that 
	we would like to obtain. 

Let the distance from a vertex $u$ to a vertex $v$ be denoted as $d(u,v)$,
	which is the length of the shortest path between $u$ and $v$. 
For every vertex $v \in V(G)$, let $X_v$ denote the number of vertices 
	in $N_{k}(v)$ that are also in $S$, where 
	$N_{k}(v) = \{u\in V|u \neq v \text{ and } d(u,v) \leq k\}$. 
Let $Y$ be the set such that $Y = \{v \in V(G) | X_v \leq r-1 \}$. 
Note that $S$ is a random set and $\EE\big[|S|\big]=np$. 
We now estimate $\PP\big[X_v < r\big]$. 

\vspace{3mm}

For a given vertex $v \in Y$, let $m= |N_k(v)|$. 
We will show by contradiction that $m \geq (d-1)^k$.

Assume that $m < (d-1)^k$. 
Then there exist vertices $u_1, \, u_2 \in N_k(v)$ such that there is a 
	vertex $w \in N_k(v)$ and $w \in \left(N_k(u_1) \cap N_k(u_2)\right)$.  
Vertex $w$ is at most distance $k$ from $v$.  
Thus, the distance from $w$ to $v$ through the path containing $u_1$ is 
	at most $k$. 
Similarly, the distance from $w$ to $v$ through the path containing $u_2$ 
	is also at most $k$.
Thus, making a cycle of length at most $2k$, which is a contradiction.   
Therefore, by the assumption that $G$ has girth at least $2k+1$, it follows 
	that $m \geq (d-1)^k$.  

\vspace{3mm}

It can be seen that $X_v$ is a $B(m,p)$ random variable. 
We use the well known Chernoff Bound \cite{Bern24,Cr33,Ch52,Ben62,Ho63,ASE} to bound 
	$\PP\big[X_v \leq r-1\big]= \PP\big[X_v < r\big]$. 

\vspace{3mm}
\noindent The Chernoff Bound states: for any $a > 0$ and random variable $X$ that 
	has binomial distribution with probability $p$ and mean $pn$,
\begin{equation}\label{chernoff}
\PP\big[X - pn < -a\big] < e^{-a^2/2pn}.
\end{equation}


\noindent We set $a = \epsilon pm$, where we let 
	$\displaystyle\epsilon = 1- \frac{r}{pm}$. 
Hence, $a = pm - r$, which results in $r = pm - a$. 
Then, by the Chernoff Bound given in Equation \ref{chernoff} we have,
\begin{align}\label{PXv}
\PP\big[X_v < r\big]	&=	\PP\big[X_v < pm-a\big]\nonumber\\
						&<	e^{-\frac{a^2}{2pm}}
						= 	e^{-\frac{{\epsilon}^2 (pm)^2}{2 pm}} 
						= 	e^{-\frac{{\epsilon}^2 pm}{2}}\nonumber\\
           				&\leq 	e^{-\frac{{\epsilon}^2 p(d-1)^k}{2}}.
\end{align}

\noindent Chernoff's bound holds whenever $\epsilon > 0$, that is when 
	$\displaystyle 1 - \frac{r}{pm} > 0$. 
Thus, it holds when $\displaystyle p > \frac{r}{(d-1)^k}$. 
By setting $\displaystyle p = \frac{2r}{(d-1)^k}$, from Equation \ref{PXv} we 
	obtain 
\begin{align}
\PP\big[X_v < r \big] <  e^{-{\epsilon}^2\left(\frac{2r}{(d-1)^k}\right)
							\frac{(d-1)^k}{2}} = e^{-{\epsilon}^2 r}.\nonumber
\end{align}

For each vertex $v \in Y$, where $\displaystyle X_v \leq r-1$, we pick 
	a set $A_v$ of $r$ vertices in $N_k(v)$ arbitrarily. 
For vertices $v$ that satisfy $X_v \geq r$, $A_v = \emptyset$.
Let $\displaystyle A = \bigcup_{v=1}^{n} A_v$.
Clearly, $S \cup A$ is a  total $(k,r)$-dominating set. 
We now estimate $\EE\big[|A|\big]$.  
\noindent By linearity of expectation, we obtain
\begin{align}
\EE\big[|A|\big]	&=		\EE\left[\left|\bigcup_{v=1}^{n} A_v \right|\right] 
					\leq 	\EE\left[\sum_{v=1}^{n} \left|A_v\right|\right]\nonumber\\
           			&=		\sum_{v=1}^{n} \EE\big[|A_v|\big]
					\leq 	n r e^{-{\epsilon}^2 r}.\nonumber
\end{align}

\noindent Again by the linearity of expectation, we now estimate 
	$\EE\big[|S \cup A|\big]$.
\begin{align}
\EE\big[|S \cup A|\big]		&=		\EE\big[|S|\big] + \EE\big[|A|\big]\nonumber\\
							&\leq	np + nre^{-{\epsilon}^2 r}\nonumber\\
							&=		\frac{2nr}{(d-1)^k} + nre^{- {\epsilon}^2 r}\nonumber.
\end{align}

\noindent Therefore, we have shown that there exists a total $(k,r)$-dominating 
	set in $G$, where 
\begin{align}
\gamma^{t}_{(k,r)}(G)		&\leq		\frac{2nr}{(d-1)^k} + nre^{-{\epsilon}^2 r}\nonumber\\
							&\leq		\frac{2nr}{(d-1)^k} + nre^{-\frac{r}{4}}\nonumber
\end{align}
since $\epsilon > 1/2$. 
\end{proof}


\vspace{4mm}
\section{Total $(k,r)$-domination number in random graphs}\label{rg}

In \cite{H14}, we presented an upper bound on the total 
	$(2,r)$-domination number of the random graphs. 
In this section we generalize this result to derive an upper 
	bound on the total $(k,r)$-domination number in random graphs
	for $k\geq 3$. 
However, before doing so we briefly discuss the main difference
	between the solutions of $\gamma_{(2,r)}^t (G(n,p))$ and 
	$\gamma_{(k,r)}^t (G(n,p))$, for $k\geq 3$.  

In Theorem \ref{k2} it is proved that in a random graph $G(n,p)$
	with $\displaystyle p \geq c \sqrt{\frac{\log n}{n}}$ and
	a fixed constant $c > 1$, a.a.s. $\gamma_{(2,r)}^t (G(n,p)) = r+1$. 
In the proof of Theorem \ref{k2} it is needed to calculate the
	probability that a vertex $u$ is not within distance-$2$ from
	a dominator vertex $v_i$, i.e. $\PP[v_i \notin N_2(u)]$. 
To connect $u$ to $v_i$ via a path of length $2$, one connecting
	vertex, denoted $w_i$, is needed. 
To determine that $\PP[v_i \notin N_2(u)]$ uses the fact that the
	edges between $u$, $w_i$ and $v$, $w_i$ that connect
	$u$ to $v_i$ (in order to obtain a path of length $2$ and $u$ to
	be dominated by $v_i$) cannot be chosen again to connect $u$ to
	$v_i$ via a different path (since the two paths would be the
	same). 
Hence, the probability that there exists an edge between any two
	connecting vertices among all different paths of length $2$ from
	$u$ to $v_i$ are independent of each other.  
So, the probability that there is a path from $u$ to $v_i$ via a
	given connecting vertex $w_i$ is $p^2$. 
There are $n-2$ such vertices and thus, the probability
	that a vertex $u$ is not within distance-$2$ from a dominator
	vertex $v_i$ is given by 
	$\displaystyle\PP[v_i \notin N_2(u)] \leq (1-p^2)^{(n-2)}$. 

Similar calculation is needed in the proof of Theorem \ref{prop_distk}
	that follows to determine the probability that a vertex $u$ is
	not within distance-$k$ from a dominator vertex $v_i$ (Theorem 
	\ref{prop_distk}, Equation \ref{not_k-adj}).
However, once we generalize to give an upper bound on the total 
	$(k,r)$-domination number, we cannot easily obtain an independence
	between the existing edges of any $k-1$ connecting vertices among 
	all the different paths of length $k$ from $u$ to $v_i$. 
When considering paths of length $k$ from $u$ to $v_i$ for the general 
	case of total $(k,r)$-domination number it becomes more difficult 
	to calculate the probability that there is a path of length $k$ from $u$ to
	$v_i$ via $k-1$ vertices. 
There may be two different paths $P_1$ and $P_2$ from $u$ to $v_i$ that 
	may share some edges between any of the connecting $k-1$ vertices
	and hence, are not independent anymore as they were in the case of
	total $(2,r)$-domination number (see Fig. \ref{fig_distk}). 

\begin{figure}[h!]
  \centering
    \includegraphics[width=0.4\textwidth]{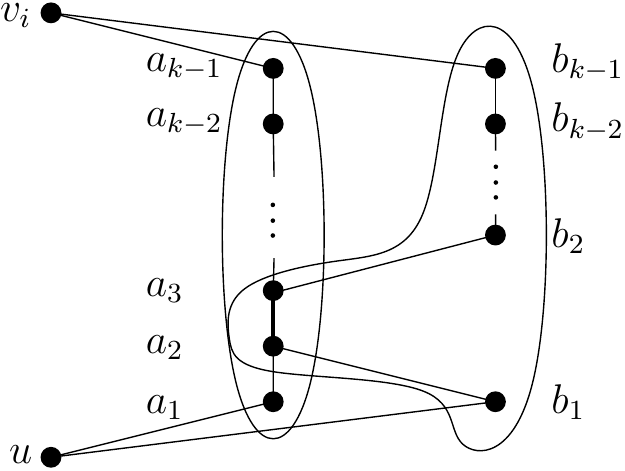}
  \caption{$P_1 = u\,a_1\,a_2\,a_3\cdots a_{k-2}\,a_{k-1}\,v_i$ and 
			$P_2 = u\,b_1\,a_2\,a_3\,b_2\cdots b_{k-2}\,b_{k-1}\,v_i$ are
			two paths between $u$ and $v_i$ that share an edge, namely $(a_2, a_3)$. }
	\label{fig_distk}
\end{figure}

Bollob\'{a}s has the following result on random graphs of diameter greater than two.  

\begin{theorem}\cite{BB}\label{bb}
Let $c$ be a positive constant, $d = d(n) \geq 2$ a natural number, and 
	define $p = p(n,c,d)$, $0<p<1$, by
\[p^d n^{d-1} = \log(n^2/c).\]
Suppose that $pn/(\log n)^3 \to \infty$. Then in $G(n,p)$ we have 
\[\lim_{n \to \infty}\mathbb{P}(diam\text{ }G = d) = e^{-c/2} 
\quad\text{and}\quad  
\lim_{n \to \infty}\mathbb{P}(diam\text{ }G = d+1) = 1 - e^{-c/2}.\]  
\end{theorem}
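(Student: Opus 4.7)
My plan is to derive both limits via a Poisson approximation for the number of pairs of vertices at distance greater than $d$.

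First, I would fix distinct vertices $u,v$ and estimate $\PP[d_G(u,v) > d]$. The natural tool is a breadth-first search rooted at $u$: writing $L_i$ for the set of vertices at graph distance exactly $i$ from $u$, the hypothesis $pn/(\log n)^3 \to \infty$ is precisely what allows the levels to grow regularly, so that $|L_{d-1}|$ concentrates around $(pn)^{d-1}$ with high probability (a standard ball-growth / martingale argument). Conditioning on the BFS tree up through depth $d-1$, the event $\{d_G(u,v) > d\}$ (restricted to $v$ not already being in the tree) coincides with $v$ sending no edge to $L_{d-1}$, which has conditional probability $(1-p)^{|L_{d-1}|}$. Using $|L_{d-1}| \sim (pn)^{d-1}$ together with the defining relation $p^d n^{d-1} = \log(n^2/c)$ yields
\[\PP[d_G(u,v) > d] \sim \exp\!\left(-p^d n^{d-1}\right) = c/n^2.\]
Letting $X_d$ denote the number of unordered pairs $\{u,v\}$ with $d_G(u,v) > d$, the first-moment calculation then gives $\EE[X_d] \to c/2$.

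Second, I would upgrade this first moment to a Poisson limit $X_d \Rightarrow \mathrm{Poisson}(c/2)$ via the method of factorial moments. The $k$-th factorial moment is a sum, over ordered $k$-tuples of distinct pairs, of the joint probability that each pair is far apart. The asymptotic independence of these pair events would follow from running BFS from all $2k$ endpoints simultaneously and checking that the expected overlap of their depth-$(d-1)$ balls is a vanishing fraction of each ball's size. This gives $\EE[X_d(X_d-1)\cdots(X_d-k+1)] \to (c/2)^k$ for every fixed $k$, which implies convergence in distribution to $\mathrm{Poisson}(c/2)$. In particular, $\PP[X_d = 0] \to e^{-c/2}$, i.e.\ $\PP[\mathrm{diam}\,G \leq d] \to e^{-c/2}$.

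Finally, to rule out diameter exceeding $d+1$, I would apply a first-moment bound one level higher. The analogous BFS estimate gives $\PP[d_G(u,v) > d+1] \lesssim \exp(-p^{d+1} n^d)$, and since $p^{d+1}n^d = pn \cdot p^d n^{d-1} = pn \log(n^2/c)$ with $pn \to \infty$, the expected number of pairs at distance exceeding $d+1$ tends to $0$. Markov's inequality then forces $\PP[\mathrm{diam}\,G \leq d+1] \to 1$, and combining everything gives the two claimed limits. The main obstacle will be the factorial-moment decoupling step: BFS trees from many distinct pairs can in principle collide, and showing that these collisions are negligible requires a careful joint ball-growth analysis. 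The mildly super-logarithmic hypothesis $pn/(\log n)^3 \to \infty$ is exactly what powers both the concentration of each level's size and the decoupling of distinct BFS explorations.
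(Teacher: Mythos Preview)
The paper does not prove this theorem at all: it is quoted verbatim from Bollob\'as \cite{BB} and used only as background context before Theorem~\ref{prop_distk}. There is therefore no ``paper's own proof'' to compare your proposal against.

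That said, your outline is a faithful sketch of the standard argument underlying Bollob\'as's result: concentration of BFS level sizes (using $pn/(\log n)^3 \to \infty$), a first-moment computation giving $\EE[X_d]\to c/2$, a factorial-moment upgrade to a Poisson limit, and a separate first-moment bound at depth $d+1$ to rule out larger diameter. The one place where your sketch is genuinely thin is the decoupling step for the $k$-th factorial moment; in the actual proof this requires a careful second-moment or overlap analysis of simultaneous neighbourhood expansions, and the exponent $3$ in $(\log n)^3$ is there precisely to absorb the union bounds that arise. If you were to write this out in full you would need to make that part rigorous, but as an outline of Bollob\'as's approach your proposal is on target.
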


Note that from Theorem \ref{bb}, the diameter of $G(n,p)$ is at most $d$
	for $\displaystyle p=\sqrt[{d-1}]{\frac{\log (n^2/c)}{n^{d-2}}}$. 
In Theorem \ref{prop_distk}, we weaken the value of $p$ to 
	$\displaystyle p' \geq d\sqrt[d]{\frac{\log n}{n^{d-1}}}$. 
Easy calculation shows that $\displaystyle d\sqrt[d]{\frac{\log n}{n^{d-1}}} < 
	\sqrt[{d-1}]{\frac{\log (n^2/c)}{n^{d-2}}}$ for $d \leq (\log n)^\epsilon$
	for a constant $\epsilon < \frac{1}{2}$. 
In particular, $p' < p$ holds when $d$ is constant.  
Our proof of Theorem \ref{prop_distk} uses Janson's Inequality, which
	we present here first \cite{ASE}.

\vspace{4mm}
Let $\Omega$ be a finite universal set and let $R$ be a random subset
	of $\Omega$ given by 
	\begin{equation}
		\PP[r \in R] = p_r,
	\end{equation} 
	these events are mutually independent over $r \in \Omega$. 
Let $\{A_i\}_{i\in I}$ be subsets of $\Omega$, where $I$ is a finite index
	set. 
Let $B_i$ be the event that $A_i \subseteq R$. 
Let $X_i$ be the indicator random variable for $B_i$ and $X = \sum \limits_{i \in I} X_i$
	the number of $A_i \subseteq R$. 
Hence, $\PP[X=0] = \PP\left[\bigcap \limits_{i\in I} \overline{B_i}\right]$. 
For $i, j \in I$ we write $i \sim j$ if $i \neq j$ and $A_i \cap A_j \neq \emptyset$.
Thus, define $\Delta = \sum \limits_{i \sim j} \PP[B_i \cap B_j]$. 
Set $\mu = \EE[X] = \sum \limits_{i \in I} \PP[B_i]$. 
In Theorem \ref{Janson} we state Janson's Inequality. \cite{ASE}

\begin{theorem}\cite{ASE}\label{Janson}
Let $\{B_i\}_{i \in I}$, $\Delta$, $\mu$ be as above. 
Then $\PP\left[\bigcap \limits_{i \in I} \overline{B_i}\right] \leq e^{-\mu + \Delta/2}$. 
\end{theorem}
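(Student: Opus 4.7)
The plan is to prove Janson's Inequality by the standard conditional-probability telescoping argument. I would first order the index set as $I=\{1,2,\ldots,m\}$ arbitrarily and write the joint non-occurrence probability as a product
\[\PP\!\left[\bigcap_{i=1}^{m}\overline{B_i}\right] \;=\; \prod_{i=1}^{m} \PP\!\left[\overline{B_i} \,\Big|\, \bigcap_{j<i}\overline{B_j}\right].\]
The target is to bound each factor by $\exp\!\left(-\PP[B_i] + \sum_{j<i,\,j\sim i}\PP[B_i\cap B_j]\right)$; multiplying these $m$ estimates yields $\exp(-\mu+\Delta/2)$, since summing $\PP[B_i\cap B_j]$ over ordered pairs with $j<i$ and $j\sim i$ counts each related unordered pair exactly once, matching $\Delta/2$.

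For a fixed $i$, I would split the prior indices into the disjoint part $J_i^{\not\sim}=\{j<i:A_j\cap A_i=\emptyset\}$ and the related part $J_i^{\sim}=\{j<i:j\sim i\}$, and set $C=\bigcap_{j\in J_i^{\not\sim}}\overline{B_j}$, $D=\bigcap_{j\in J_i^{\sim}}\overline{B_j}$. Because the element-indicators $\{[r\in R]\}_{r\in\Omega}$ are mutually independent, and because $B_i$ depends only on indicators of elements in $A_i$ while $C$ depends only on indicators of elements in $\Omega\setminus A_i$, the events $B_i$ and $C$ are independent; the same holds for $B_i\cap B_j$ and $C$ for each $j\in J_i^{\sim}$, since $A_i\cup A_j$ is still disjoint from the element set supporting $C$.

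The heart of the argument is the inequality
\[\PP[B_i\mid C\cap D] \;\geq\; \PP[B_i] \;-\; \sum_{j\in J_i^{\sim}}\PP[B_i\cap B_j].\]
To derive it, I would first use $\PP[B_i\mid C\cap D]\geq \PP[B_i\cap D\mid C]$ (since $\PP[D\mid C]\leq 1$), then expand $\PP[B_i\cap D\mid C]=\PP[B_i\mid C]-\PP[B_i\cap\overline{D}\mid C]$, evaluate $\PP[B_i\mid C]=\PP[B_i]$ by independence, and bound $\PP[B_i\cap\overline{D}\mid C]=\PP\!\left[B_i\cap\bigcup_{j\in J_i^{\sim}}B_j\,\big|\,C\right]\leq\sum_{j\in J_i^{\sim}}\PP[B_i\cap B_j]$ via the union bound together with independence from $C$. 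Taking complements yields $\PP[\overline{B_i}\mid\bigcap_{j<i}\overline{B_j}]\leq 1-\PP[B_i]+\sum_{j<i,\,j\sim i}\PP[B_i\cap B_j]$, and the elementary inequality $1+x\leq e^x$ converts this factor into the desired exponential form.

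The main obstacle I expect is the bookkeeping in the independence step: one must verify precisely that the conditioning event $C$ involves only indicator variables outside $A_i$ so that it is genuinely independent of both $B_i$ and $B_i\cap B_j$ for $j\in J_i^{\sim}$, and one must be careful that the split into $J_i^{\not\sim}$ and $J_i^{\sim}$ respects the original order on $I$. Once that is pinned down, the remaining manipulations (a union bound, application of $1+x\leq e^x$, and a telescoping product) are routine, and the factor $\Delta/2$ rather than $\Delta$ arises naturally from conditioning on prior events $j<i$, which converts the ordered-pair sum defining $\Delta$ into an unordered one.
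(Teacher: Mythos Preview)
The paper does not prove this theorem; Janson's Inequality is quoted from \cite{ASE} and used as a black box in the proof of Theorem~\ref{prop_distk}. There is therefore no paper proof to compare against.

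Your outline is the standard telescoping argument and is mostly correct, but one step is wrong as written. You assert that for each $j\in J_i^{\sim}$ the event $B_i\cap B_j$ is independent of $C=\bigcap_{\ell\in J_i^{\not\sim}}\overline{B_\ell}$, ``since $A_i\cup A_j$ is still disjoint from the element set supporting $C$.'' That disjointness need not hold: membership in $J_i^{\not\sim}$ only guarantees $A_\ell\cap A_i=\emptyset$; it says nothing about $A_\ell\cap A_j$. For a concrete failure, take $A_1=\{1\}$, $A_2=\{1,2\}$, $A_3=\{2,3\}$ and $i=3$: then $J_3^{\not\sim}=\{1\}$, $J_3^{\sim}=\{2\}$, and $A_3\cup A_2=\{1,2,3\}$ meets $A_1$. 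So $B_i\cap B_j$ and $C$ are in general dependent, and the equality $\PP[B_i\cap B_j\mid C]=\PP[B_i\cap B_j]$ that you invoke in the union-bound step is not available.

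The standard repair, and the one used in \cite{ASE}, is the Harris/FKG correlation inequality: $B_i\cap B_j$ is an increasing event and $C$ is a decreasing event in the independent indicators $\{[r\in R]\}_{r\in\Omega}$, hence $\PP[B_i\cap B_j\mid C]\leq\PP[B_i\cap B_j]$. Inequality in this direction is all your argument needs; with that correction the remainder of your outline (the split into $J_i^{\sim}$ and $J_i^{\not\sim}$, the bound $1+x\le e^x$, and the telescoping product yielding $e^{-\mu+\Delta/2}$) goes through unchanged.
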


\vspace{3mm}
Now, we present our main result on the total $(k,r)$-domination number of the
	random graphs.
\begin{theorem} \label{prop_distk}
For any positive integers $k\geq 3$ and $r$, in a random graph $G(n,p)$ with 
	$\displaystyle p \geq k \sqrt[k]{\frac{\log n}{n^{k-1}}}$, \,\, a.a.s. 
	\, $\displaystyle \gamma^{t}_{(k, r)}\left(G(n, p)\right) = r+1$.
\end{theorem}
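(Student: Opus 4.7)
The plan is to follow the same high-level strategy as the proof of Theorem \ref{k2}: fix an arbitrary set $S$ of $r+1$ vertices and show that $S$ is a total $(k,r)$-dominating set with probability $1-o(1)$. The new ingredient, as flagged in the discussion preceding the theorem statement, is that for $k\geq 3$ different length-$k$ paths between two fixed vertices may share edges, so the direct independence calculation used when $k=2$ no longer goes through. I would instead apply Janson's inequality (Theorem \ref{Janson}) to the event that no short $uv$-path exists.

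The lower bound $\gamma^{t}_{(k,r)}(G(n,p))\geq r+1$ is deterministic: if $S$ is total $(k,r)$-dominating then for every $v\in S$ the set $S\setminus\{v\}$ must already contain $r$ vertices within distance $k$ of $v$, so $|S|\geq r+1$.

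For the matching upper bound, fix any $S=\{v_1,\ldots,v_{r+1}\}\subseteq V(G)$. This $S$ fails to be total $(k,r)$-dominating exactly when either some pair $v_i,v_j$ has $d(v_i,v_j)>k$, or some $u\notin S$ satisfies $d(u,v_i)>k$ for at least two distinct indices. Let $q=\PP[d(u,v)>k]$ for a fixed pair $u\neq v$ (the same for every pair by symmetry). Using the trivial bound $\PP[d(u,v_i)>k\wedge d(u,v_j)>k]\leq q$, a union bound gives
\[
\PP[S \text{ is not total } (k,r)\text{-dominating}]\;\leq\;\binom{r+1}{2}q+(n-r-1)\binom{r+1}{2}q \;=\; O\bigl(n\,r^2\,q\bigr).
\]
So it suffices to show $q=o(1/n)$; under the hypothesis on $p$ I expect polynomial decay in $n$ with a very large exponent.

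The heart of the argument is this bound on $q$. Index by $\alpha$ the length-$k$ $uv$-paths; there are $(n-2)(n-3)\cdots(n-k)=\Theta(n^{k-1})$ of them, and each exists with probability $p^k$. Let $B_\alpha$ be the event that all edges of $\alpha$ are present. Then $\{d(u,v)>k\}\subseteq\bigcap_\alpha\overline{B_\alpha}$, so Janson yields $q\leq e^{-\mu+\Delta/2}$ with $\mu=\sum_\alpha\PP[B_\alpha]\geq(1-o(1))k^k\log n$ by the hypothesis $p^k\geq k^k\log n/n^{k-1}$. The main obstacle, illustrated by Figure \ref{fig_distk}, is controlling $\Delta=\sum_{\alpha\sim\beta}\PP[B_\alpha\cap B_\beta]$, because two length-$k$ $uv$-paths can overlap in complicated edge-configurations. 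My plan for $\Delta$ is: for each edge $e$, bound the number of length-$k$ $uv$-paths through $e$ by $O(k\,n^{k-2})$ (splitting the path at $e$ and counting segments on each side), giving $O(k^2\,n^{2k-3})$ pairs of edge-intersecting paths and hence $\Delta=O(k^2\,n^{2k-3}\,p^{2k-1})$. Substituting the value of $p$, this simplifies to a quantity that tends to $0$, so $q\leq e^{-(1-o(1))k^k\log n}=n^{-(1-o(1))k^k}$, which for $k\geq 3$ is overwhelmingly smaller than $1/n$. Inserting this back into the union bound gives $\PP[S \text{ fails}]\to 0$, whence $\gamma^{t}_{(k,r)}(G(n,p))\leq r+1$ a.a.s., matching the lower bound.
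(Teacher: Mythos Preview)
Your overall strategy matches the paper's: fix $r+1$ candidate dominators, bound the probability that some vertex has fewer than $r$ of them within distance $k$ via a union bound, and control $\PP[v_i\notin N_k(u)]$ with Janson's inequality applied to the family of length-$k$ $uv$-paths. The lower bound and the union-bound framework are fine, and your $\mu$ computation (using all ordered paths rather than the paper's one canonical ordering per vertex set) is a harmless variant.

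The gap is in your estimate of $\Delta$. You count $O(k^{2}n^{2k-3})$ ordered pairs of edge-intersecting paths and then multiply by $p^{2k-1}$, implicitly using $\PP[B_{\alpha}\cap B_{\beta}]\le p^{2k-1}$. That inequality points the wrong way: if $\alpha$ and $\beta$ share $t$ edges then $\PP[B_{\alpha}\cap B_{\beta}]=p^{2k-t}$, and since $p<1$ this is \emph{at least} $p^{2k-1}$, with equality only when $t=1$. Pairs with larger overlap are individually much more likely (by a factor $p^{-(t-1)}$, which is polynomially large in $n$), and your single-edge count does not control their contribution. The paper deals with this by stratifying over $t$: it bounds the number of pairs sharing $t$ edges and weights by $p^{2k-t}$, obtaining a $t$-th term of order $(\log n)^{(2k-t)/k}n^{-t/k}$, so that every stratum tends to $0$ and the $t=1$ term dominates. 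Once you replace your one-line bound by this stratified sum, the rest of your argument goes through --- and, reassuringly, the dominant $t=1$ term is exactly the order of magnitude you wrote down, so your conclusion $\Delta\to 0$ is correct even though the justification as stated is not.
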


\begin{proof}
Let $D \subseteq V(G(n,p))$ be a total $(k,r)$-dominating set and
	let the vertices in $D$ be labelled as 
	$v_1,\, v_2,\,\cdots,\, v_i,\cdots,\, v_{r+1}$, where $1 \leq i \leq r+1$. 
The probability that a vertex $u \in V(G(n,p))$ is not within 
	distance-$k$ from a vertex $v_i \in D$ is denoted by 
	$\mathbb{P}[v_i \notin N_k(u)]$.

Let $X$ be a random variable that denotes the number of vertices $u \in V(G(n,p))$,
	where the number of $k$-adjacent vertices of $u$ in $D$ is less than $r$. 
We would like to show that the number of vertices in $V(G(n,p))$ with less than
	$r$ dominators tends to $0$. 
That is, $\mathbb{P}[X>0] \to 0$ as $n \to \infty$. 

We define a fixed vertex $u$ as \emph{bad}, if $u$ in its $k$-neighborhood
	has less than $r$ dominators in $D$.  
By linearity of expectation we have 
\begin{equation}\label{EXk}
\EE[X] = n \cdot \mathbb{P}[\text{fixed $u$ is bad}]. 
\end{equation}

There are $n-2$ vertices aside from $u$ and $v_i$ to connect $u$ to $v_i$
	via a path of length $k$. 
To connect $u$ to $v_i$ such that $d(u, v_i) = k$, additional $k-1$
	connecting vertices are necessary to create a path of length
	$k$ from $u$ to $v_i$.
There are ${n-2 \choose k-1}$ possible ways to choose these $k-1$ 
	vertices. 
Hence, we have ${n-2 \choose k-1}$ such sets that consist of $k-1$
	vertices.  
We denote these sets by $S_1, \, S_2,\, \cdots , S_{n-2 \choose k-1}$. 

We would like to show that $\PP[v_i \notin N_k(u)] \to 0$ as 
	$n \to \infty$.
This is equivalent to showing that the probability one of $S_i$ connects
	$u$ to $v_i$ via a path of length $k$ tends to $1$ as $n \to \infty$. 

Let $S_i = \{a_{i_1}, a_{i_2}, \cdots , a_{i_{k-1}}\}$. 
For any pair $u$ and $v_i$ that are fixed, we number all other $n-2$ 
	vertices and assume that all vertices in $S_i$ are connected in 
	ascending order of the vertex number. 
Note that some edges in $S_i$ and $S_j$, where $i \neq j$ are the same. 
To calculate the probability of the appearance of the $k-2$ edges in 
	each $S_i$ we must consider the dependencies between any two sets
	$S_i$ and $S_j$ for $i \neq j$. 
To do this, we use Janson's inequality from Theorem \ref{Janson}. 

Let $R$ be the set $E(G(n,p))$ and let $A_i$ be the set of edges such that 
	$A_i = \{ua_{i_1},a_{i_1}a_{i_2},\cdots, a_{i_{k-2}}a_{i_{k-1}},\\ 
	a_{i_{k-1}}v_i\}$. 
Let $B_i$ be the event that $A_i \subseteq R$. 
So, $\PP[A_i \in R] = \PP[B_i]$. 
Let $X_i$ be the indicator random variable for $B_i$ and 
	$\displaystyle X_B = \sum \limits_{i=1}^{n-2 \choose k-1} X_i$ be the number
	of $A_i \subseteq R$. 
Hence, $\PP[X_B = 0] = \PP\left[\bigcap \limits_{i=1}^{n-2 \choose k-1} 
	\overline{B_i}\right]$. 
For $1 \leq i, j \leq {n-2 \choose k-1}$ we write $i \sim j$ if
	$i \neq j$ and $A_i \cap A_j \neq \emptyset$. 
$\Delta$ is defined as $\displaystyle\sum \limits_{i\sim j}\PP[B_i \cap B_j]$. 
We would like to show that $\PP[X_B = 0] \to 0$ as $n \to \infty$.

First we determine $\displaystyle\mu = \EE[X_B] = \sum 
	\limits_{i=1}^{n-2 \choose k-1}\PP[B_i]$. 

\begin{align}\label{mu}
\EE[X_B]	&=	\EE\left[\sum \limits_{i=1}^{n-2 \choose k-1} X_i \right]
					= \sum \limits_{i=1}^{n-2 \choose k-1} \EE[X_i] 
					= \sum \limits_{i=1}^{n-2 \choose k-1} \EE[B_i] \nonumber \\
			&=	{n-2 \choose k-1} p^k \geq \left(\frac{n-2}{k-1}\right)^{k-1} p^k 
					\qquad\qquad\qquad\qquad\left(\text{by }\, {n\choose k} 
					\geq \left(\frac{n}{k}\right)^k\right)\nonumber\\
			&\geq	\frac{(n-2)^{k-1}}{(k-1)^{k-1}} 
						\left(k \sqrt[k]{\frac{\log n}{n^{k-1}}}\right)^k \nonumber\\
			&=	\frac{(n-2)^{k-1}}{(k-1)^{k-1}}\, k^k \,\frac{\log n}{n^{k-1}} 
			=	\left(\frac{k^k}{(k-1)^{k-1}}\right)\left(\frac{n-2}{n}\right)^{k-1}\log n \nonumber \\
			&=	k\left(\frac{k}{k-1}\right)^{k-1}  \left(1 - \frac{2}{n}\right)^{k-1} \log n \nonumber\\
			&\geq	k \left(1 - \frac{2}{n}\right)^{k-1} \log n \nonumber\\
			&\geq	0.9 k \log n	
\end{align}
	for $n$ large enough. 
Thus, in Janson's Inequality let $\mu = 0.9 k \log n$. 

Now we determine $\Delta$. 
Assume that the number of edges shared between any given $A_i$
	and $A_j$ is given by $t$ and hence, $A_j$ shares at least $t$
	vertices with $A_i$. 
There are ${n-2 \choose k-1}$ such $A_i$ sets. 
We fix one such set $A_i$ and determine the dependencies between
	$A_i$ and all other sets $A_j$, where $j \neq i$. 
Thus, we have 
\begin{align}\label{delta1}
\Delta	&=		\sum \limits_{i=1}^{n-2 \choose k-1} \PP[B_i \cap B_j] \nonumber \\
		&\leq	{n-2 \choose k-1} \sum \limits_{\underset{j \sim i}{i \text{ fixed}}}
					^{n-2 \choose k-1}\PP[B_j \cap B_i] \nonumber \\
		&\leq	{n-2 \choose k-1} \sum \limits_{t=1}^{k-1} {k \choose t} 
			{n-2 \choose k-1-t} p^{2k-t}.
\end{align}

In Equation \ref{delta1}, the probability that a fixed $A_i$ intersects
	(i.e. shares) at $t$ edges with a set $A_j$ for $i \neq j$, is 
	$p^k p^{k-t} = p^{2k-t}$. 
When calculating this probability we are interested in counting the
	number of edges $t$ that are shared between $A_i$ and $A_j$. 
That is, between which vertices $t$ edges are shared is not of interest. 
Between any two vertices $u$ and $v_i$ there are $k$ edges and hence,
	the number of ways to determine the $t$ shared edges is 
	${k \choose t}$. 
For any $A_j$, the two vertices $u$ and $v_i$ are fixed. 
From the $k-1$ other vertices on the path from $u$ to $v_i$, $t$ are shared
	with $A_i$. 
Hence, to complete $A_j$ that share $t$ edges with $A_i$, there are 
	${n-2 \choose k-1-t}$ possible ways to add the remaining vertices. 
Thus, for a given value $t$, ${k \choose t} {n-2 \choose k-1-t}$ determine
	how many sets $A_j$ share precisely $t$ edges with $A_i$. 
Thus, from Equation \ref{delta1} we have
\begin{align}\label{delta2}
\Delta	&\leq	{n-2 \choose k-1}\sum \limits_{t=1}^{k-1} {k \choose t} 
						{n-2 \choose k-1-t} p^{2k-t} \nonumber\\
		&\leq	\frac{n^{k-1}}{(k-1)!} 2^k \sum \limits_{t=1}^{k-1}
					{n-2 \choose k-1-t} p^{2k-t}
					\qquad\qquad\left(\text{by }\, {n\choose k} \leq \frac{n^k}{k!}
					\text{ and }\, {n\choose k}\leq 2^n\right)\nonumber\\
		&\leq	\frac{n^{k-1}}{(k-1)!} 2^k \sum \limits_{t=1}^{k-1}
					\frac{(n-2)^{k-1-t}}{(k-1-t)!} p^{2k-t}
					\qquad\qquad\left(\text{by }\, {n\choose k} \leq \frac{n^k}{k!}\right)\nonumber\\
		&\leq	\frac{n^{k-1}}{(k-1)!} 2^k \sum \limits_{t=1}^{k-1}
					n^{k-1-t} p^{2k-t}.
\end{align}
\noindent We now calculate $n^{k-1-t} p^{2k-t}$,
\begin{align}\label{in_sum}
n^{k-1-t} p^{2k-t}	&=	\frac{n^{k-t}}{n} p^k p^{k-t} 
						= \frac{n^{k-t} p^{k-t}}{n} p^k \nonumber\\
					&=	\frac{n^{k-t} p^{k-t}}{n} 
							\left(k \sqrt[k]{\frac{\log n}{n^{k-1}}} \right)^k 
						= \frac{n^{k-t} p^{k-t}}{n}\, k^k\, \frac{\log n}{n^{k-1}} \nonumber\\
					&=	\frac{n^{k-t} p^{k-t}}{n^k}\, k^k\, \log n
						= n^{-t} p^{k-t} k^k \log n \nonumber\\
					&= 	\left(np\right)^{-t} \left(p^k k^k \log n\right) 
						=	\left(n k \sqrt[k]{\frac{\log n}{n^{k-1}}}\right)^{-t}
							\left(p^k k^k \log n\right) \nonumber\\
					&= 	\left(n^{1-(k-1)/k}\, k \,\sqrt[k]{\log n} \right)^{-t} 
							\left(p^k k^k \log n\right)	\nonumber\\
					&=	\left(n^{1/k}\, k \,\sqrt[k]{\log n} \right)^{-t} 
							\left(p^k k^k \log n\right)	
						= \frac{\left(p^k k^k \log n\right)}
							{\left(k  n^{1/k}\, \sqrt[k]{\log n}\right)^t} \nonumber\\
					&\leq	\frac{\left(p^k k^k \log n\right)}{k n^{1/k}\, \sqrt[k]{\log n}}.			
\end{align}

\noindent From Equations \ref{delta2} and \ref{in_sum} we have
\begin{align}
\Delta	&\leq		\frac{n^{k-1}}{(k-1)!} 2^k \sum \limits_{t=1}^{k-1}
						n^{k-1-t} p^{2k-t} 
		\leq		\frac{n^{k-1}}{(k-1)!} 2^k \sum \limits_{t=1}^{k-1}
						\frac{p^k k^k \log n}{k n^{1/k} \sqrt[k]{\log n}} \nonumber\\
		&\leq		\frac{n^{k-1}}{(k-1)!} 2^k\, k\,\, 
						\frac{p^k k^k \log n}{k n^{1/k} \sqrt[k]{\log n}} 
		\leq		2^k \frac{n^{k-1}}{(k-1)!} 
						\left(k \sqrt[k]{\frac{\log n}{n^{k-1}}}\right)^k
						\frac{k^k \log n}{n^{1/k} \sqrt[k]{\log n}} \nonumber\\
		&\leq		\frac{2^k}{(k-1)!} \, k^{2k}\,  
						\frac{n^{k-1} \log^2 n}{n^{k-1} n^{1/k} \sqrt[k]{\log n}} 
		\leq		O(k) \, \frac{\log^2 n}{n^{1/k} \sqrt[k]{\log n}} 
		\leq		O(k)\,	\frac{\log^2 n}{n^{1/k}}.	\nonumber	
\end{align}

Thus, $\Delta \to 0$ as $n \to \infty$. Since $\Delta < \mu$ by Janson's 
	Inequality we have 
\begin{align}
\PP[X_B = 0]	&=		\PP\left[\bigcap \limits_{i=1}^{n-2 \choose k-1} 
							\overline{B}_i\right] 
				\leq	e^{-\mu/2} \leq e^{-\frac{0.9k \log n}{2}} 
				\leq	e^{-\frac{9}{20} k \log n}. \nonumber
\end{align}

Thus, the probability that a vertex $u$ is not within distance-$k$ from a 
	dominator vertex $v_i$ is given by
\begin{align}\label{not_k-adj}
\PP\left[v_i \notin N_k(u)\right]	&\leq	\PP\left[\bigcap \limits_{i=1}^{n-2 \choose k-1} 
												\overline{B}_i\right] 
									\leq	e^{-\frac{9}{20} k \log n}.
\end{align}

Let $X_u$ be the random variable that denotes the number of 
	non-dominators of $u$. 
We note that $u$ may be a dominating vertex.
Then 
\begin{eqnarray*}
\EE[X_u] 	&\leq& r\,e^{-\frac{9}{20} k \log n}.\nonumber
\end{eqnarray*} 

By Markovs's Inequality we have 
	$\displaystyle\PP[X_u > 0] \leq \EE[X_u] \leq r\,e^{-\frac{9}{20} k \log n}$. 
Thus, 
\begin{align}\label{Pbadk}
\mathbb{P}[\text{fixed $u$ is bad}] &\leq \mathbb{P}[X_u > 0] 
	 \leq  r\,e^{-\frac{9}{20} k \log n}.
\end{align}

By Equation \ref{EXk} and Equation \ref{Pbadk} we have 
	$\displaystyle\EE[X] \leq n\,r\,e^{-\frac{9}{20} k \log n}$ and by
	Markov's Inequality it follows, 
\begin{equation}\label{distk}
\mathbb{P}[X>0] \leq \EE[X] \leq n\,r\,e^{-\frac{9}{20} k \log n}. 
\end{equation}

\vspace{2mm}
From Equation \ref{distk}, we determine the value of $e^{-\frac{9}{20} k \log n}$
	to be 
\begin{equation}
e^{-\frac{9}{20} k \log n} \geq \left(e^{\log n}\right)^{-\frac{9}{20} k} 
			= n^{-\frac{9}{20}k}.\nonumber
\end{equation}

Thus, we have
\begin{equation}
nre^{-\frac{9}{20} k \log n} \leq \frac{nr}{n^{\frac{9}{20}k}}
		\leq \frac{r}{n^{\frac{9}{20}k-1}}.\nonumber
\end{equation}

For $k \geq 3$, $\,\displaystyle{\frac{r}{n^{\frac{9}{20}k-1}} \to 0}$ as $n \to \infty$. 
Therefore, $\PP[X>0] \to 0$ as $n \to \infty$. 
\end{proof}
\vspace{2mm}

\bibliographystyle{abbrvnat}
\bibliography{biblo}

\end{document}